\newtheorem{thm}{Theorem}
\newtheorem{defn}{Definition}
\newtheorem{exam}{Example}
\def \Rm#1{\mbox{\rm #1}}
\def \lsem      {\raise1pt\hbox{\Rm {[\kern-.12em[}}}
\def \rsem      {\raise1pt\hbox{\Rm {]\kern-.12em]}}}
\def \sem#1{\mbox{\lsem$#1$\rsem}}
\title{Equivalence Checking of Parameterised Quantum Circuits}
\author{
\IEEEauthorblockN{Xin~Hong}
\IEEEauthorblockA{\textit{Institute of Software} \\
\textit{Chinese Academy of Sciences}\\
Beijing, China \\
hongxin@ios.ac.cn}
\and
\IEEEauthorblockN{Wei-Jia Huang}
\IEEEauthorblockA{\textit{Foxconn Research } \\
Taipei, Taiwan \\
wei-jia.huang@foxconn.com}
\and
\IEEEauthorblockN{Wei-Chen Chien}
\IEEEauthorblockA{\textit{MediaTek, Inc.} \\
Hsinchu, Taiwan \\
owen.chien@mediatek.com}
\and
\IEEEauthorblockN{Yuan Feng}
\IEEEauthorblockA{\textit{Centre for Quantum Software and Information} \\
\textit{University of Technology Sydney}\\
Sydney, Australia \\
yuan.feng@uts.edu.au}
\and
\IEEEauthorblockN{\color{white}[000000]}
\and
\IEEEauthorblockN{Min-Hsiu Hsieh}
\IEEEauthorblockA{\textit{Foxconn Research } \\
Taipei, Taiwan \\
minhsiuh@gmail.com}
\and
\IEEEauthorblockN{Sanjiang Li}
\IEEEauthorblockA{\textit{Centre for Quantum Software and Information} \\
\textit{University of Technology Sydney}\\
Sydney, Australia \\
sanjiang.li@uts.edu.au}
\and
\IEEEauthorblockN{Mingsheng~Ying}
\IEEEauthorblockA{\textit{Institute of Software} \\
\textit{Chinese Academy of Sciences}\\
Beijing, China \\
yingms@ios.ac.cn}
}
\date{March 2022}
\begin{document}
\maketitle
\begin{abstract}
Parameterised quantum circuits (PQCs) hold great promise for demonstrating quantum advantages in practical applications of quantum computation.  Examples of successful applications include the variational quantum eigensolver, the quantum approximate optimisation algorithm, and quantum machine learning.  However, before executing PQCs on real quantum devices, they undergo compilation and optimisation procedures.  Given the inherent error-proneness of these processes, it becomes crucial to verify the equivalence between the original PQC and its compiled or optimised version.  Unfortunately, most existing quantum circuit verifiers cannot directly handle parameterised quantum circuits;  instead, they require parameter substitution to perform verification.

In this paper, we address the critical challenge of equivalence checking for PQCs.  We propose a novel compact representation for PQCs based on tensor decision diagrams.  Leveraging this representation, we present an algorithm for verifying PQC equivalence without the need for instantiation.  Our approach ensures both effectiveness and efficiency, as confirmed by experimental evaluations.  The decision-diagram representations offer a powerful tool for analysing and verifying parameterised quantum circuits, bridging the gap between theoretical models and practical implementations.

\end{abstract}

\section{Introduction}

In recent years, quantum computing has undergone rapid development due to its potential to exponentially accelerate many classical algorithms. Among these advancements, methods utilising parameterised quantum circuits (PQCs) or variational quantum circuits have emerged as promising candidates for practical applications \cite{mitarai2018quantum,cerezo2021variational}.

PQCs are the quantum counterpart of classical neural networks, which have been used in almost every aspect of our life.  Besides a set of standard quantum gates like single-qubit Pauli gates and two-qubit CNOT or CZ gates, a PQC also employs a set of \emph{parameterised}, typically, single-qubit rotation gates. These parameters are variables that can be adjusted during the computation. This flexibility allows PQCs to be versatile and adaptable for various quantum computing tasks, especially those involving optimisation. The key idea here is to create a family of quantum circuits where the parameters can be tuned to optimise the circuit's performance for a specific task. By adjusting these parameters, a PQC can be tailored to solve particular problems or find optimal solutions more efficiently. This makes them particularly useful for quantum algorithms that involve optimisation tasks, such as the variational quantum eigensolver \cite{Peruzzo_2014}, the quantum approximate optimisation algorithm \cite{Farhi_2014}, and quantum machine learning \cite{benedetti2019Parameterized,mitarai2018quantum}.


Usually, a PQC is designed without a specific target quantum device in mind. This means that, when executing a PQC on a real quantum device, the circuit must be compiled or optimised to meet the connectivity constraints and characteristics of the quantum device. This transformation results in a modified PQC. As these compilation and optimisation processes are susceptible to errors, it becomes crucial to verify the equivalence between the original PQC and the modified version.

For each fixed set of values, we can verify the equivalence of two PQCs by regarding them as standard quantum circuits. For example, we can employ the decision diagram-based methods such as QuIDD \cite{viamontes2003improving}, QMDD \cite{niemann2015qmdds} and the recent tensor decision diagram (TDD) \cite{hong2020tensor}, which all provide canonical representations for quantum circuits. However, it becomes challenging to ensure equivalence across all possible parameter values. At the moment, none of these DDs supports a direct representation of PQCs.




As far as we know, \cite{xu2022quartz} and \cite{peham22equivalence} are the only works devoted to PQC equivalence checking. In \cite{xu2022quartz}, the equivalence of two PQCs is checked by first representing the circuits as symbolic matrices and then calling an SMT solver. In this method, the size of the symbolic matrix and the checking time both grow exponentially with the number of qubits, restricting the scale of the circuits that can be checked to only a few qubits. 
ZX calculus \cite{van2020zx} has been used to check the equivalence of PQCs. This method first transforms PQCs into ZX diagrams and then checks their equivalence by simplifying the diagrams, thus avoiding the use of matrices with exponential dimensions. 
While the rewriting approach may be incomplete, a negative answer does not necessarily imply that the circuits in question are non-equivalent. To mitigate false positives, the method uses multiple non-random instantiations for circuit equivalence checks. This may result in complex circuits for the equivalence checker, additional process steps are then introduced.\cite{peham22equivalence}. 




Recall that decision diagram-based methods have shown great efficiency in the equivalence checking of both classical circuits \cite{molitor2007equivalence} and quantum circuits \cite{burgholzer2020improved,hong2021approximate}. The following question arises naturally: \emph{Can decision diagram-based methods help in PQC equivalence checking}? This paper is devoted to answering this question. We start with a canonical representation of trigonometric polynomials, which are the main symbolic parts in PQCs, and then propose a symbolic TDD by extending TDD \cite{hong2020tensor} with trigonometric polynomial weights. Analogous to TDD, we prove that symbolic TDD provides a compact and canonical representation of PQCs. Experiments on benchmarks with up to 65 qubits, 5135 gates, and 910 parameters demonstrate the efficiency of the symbolic TDD in checking PQC equivalence.


The remainder of this paper proceeds as follows. In section \ref{sec:background}, we provide basic concepts of quantum computing and PQCs. In section \ref{sec:re_vqc}, we discuss the representation of PQCs and propose the symbolic TDD. In section \ref{sec:ec_vqc}, we investigate the equivalence checking of PQCs. The performance of the symbolic TDD-based method is demonstrated by a series of experiments in section \ref{sec:experiments}. Section \ref{sec:conclusion} then concludes the paper.

\section{Background}\label{sec:background}

\subsection{Quantum Computing}

The most fundamental concept in quantum computing is the qubit, the counterpart of the classical bit. While a classical bit can be either 0 or 1, a qubit can be in a superposition of two basis states. Using the Dirac notation, a qubit state can be represented as $\ket{\psi}=a\ket{0}+b\ket{1}$, where $a, b$ are complex numbers with $|a|^2+|b|^2=1$. It can also be represented by the vector $\ket{\psi}=[a,b]^T$. Furthermore, an $n$-qubit state can be represented by a $2^n$-dimensional vector $\ket{\psi}=[\alpha_0,\cdots,\alpha_{2^n-1}]^T$. 

Quantum states evolve through the applications of quantum gates. Each quantum gate corresponds to a unitary matrix. 
A single qubit gate corresponds to a two-dimensional matrix, and an $n$-qubit gate corresponds to a $2^n$-dimensional matrix. An $n$-qubit state $\ket{\psi}$ evolves through the application of an $n$-qubit quantum gate $U$ as $\ket{\psi'}=U\ket{\psi}$.









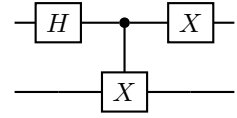
\begin{figure}[h]
\centerline{
\begin{quantikz}[column sep=0.28cm,row sep=0.4cm]
  &\gate{H} &\ctrl{1} &\gate{X}   &\qw\\
  &\qw              &\gate{X}      &\qw             &\qw     \\
\end{quantikz}
}
\caption{An example for quantum circuit.}
\label{fig:cir_qc1}
\end{figure}

A quantum circuit consists of a sequence of quantum gates. Given a quantum state, these quantum gates are applied in a sequential manner. For example, Fig. \ref{fig:cir_qc1} gives a simple quantum circuit consisting of two qubits and three quantum gates. The whole transformation implemented by this circuit is
$$U=\frac{1}{\sqrt{2}}\begin{bmatrix}
	0 & 1 & 0 & -1\\ 
	1 & 0 & -1 & 0\\
	1 & 0 & 1 & 0\\
	0 & 1 & 0 & 1
\end{bmatrix}.$$

\subsection{Parameterised Quantum Circuits (PQCs)}

PQCs are quantum circuits extended with parameterised gates, such as 
\begin{eqnarray*}
    R_z(\theta) &=& \begin{bmatrix}e^{-i\theta/2}&0\\
    0&e^{i\theta/2}
\end{bmatrix},\\
R_y(\theta)&=& \begin{bmatrix}\cos(\frac{\theta}{2})&-\sin(\frac{\theta}{2})\\
\sin(\frac{\theta}{2})&\cos(\frac{\theta}{2})
\end{bmatrix},\\ 
R_x(\theta)&=&\begin{bmatrix}\cos(\frac{\theta}{2})&-i\sin(\frac{\theta}{2})\\
-i\sin(\frac{\theta}{2})&\cos(\frac{\theta}{2})
\end{bmatrix}.
\end{eqnarray*}
Fig. \ref{fig:cir_vqc1} gives a simple PQC.

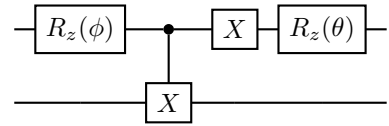
\begin{figure}[h]
\centerline{
\begin{quantikz}[column sep=0.28cm,row sep=0.4cm]
  &\gate{R_z(\phi)} &\ctrl{1} &\gate{X} &\gate{R_z(\theta)}  &\qw\\
  &\qw              &\gate{X} &\qw      &\qw             &\qw     \\
\end{quantikz}
}
\caption{A parameterised quantum circuit.}
\label{fig:cir_vqc1}
\end{figure}

When the values of the parameters are given, an $n$-qubit PQC reduces to an $n$-qubit conventional quantum circuit, thus representing a $2^n\times 2^n$ complex value matrix. More concisely, a PQC can be represented by a $2^n\times 2^n$-symbolic matrix. For example, the circuit shown in Fig. \ref{fig:cir_vqc1} has the following symbolic matrix representation:
$$
\begin{bmatrix}
	0 & 0 & 0 & e^{-i\frac{\theta-\phi}{2}}\\ 
	0 & 0 & e^{-i\frac{\theta-\phi}{2}} & 0\\
	e^{i\frac{\theta-\phi}{2}} & 0 & 0 & 0\\
	0 & e^{i\frac{\theta-\phi}{2}} & 0 & 0
\end{bmatrix}.
$$

\section{The TDD Representation of PQCs}\label{sec:re_vqc}

\subsection{Tensor Decision Diagram}
A predominant method of equivalence checking of conventional quantum circuits is to represent quantum circuits, i.e., unitary matrices, as decision diagrams. 

Tensor Decision Diagram (TDD) \cite{hong2020tensor} is a decision diagram that can be used to represent tensors. The basic operations of tensors, such as contraction and addition, can all be implemented for TDD, thus making it suitable for doing the contraction of tensor networks.  Since quantum circuits are special tensor networks, TDD has also been used in the simulation and equivalence checking of quantum circuits \cite{hong2020tensor,hong2021approximate,hong2023decision}.

A tensor is a multidimensional map ${\phi_{x_1\ldots x_n}: \{0,1\}^n \to \mathbb{C}}$, where $x_1,\cdots,x_n$ are the indices of the tensor and $\phi_{x_1\ldots x_n}(a_1, \ldots, a_n)$ its value for the evaluation $x_1=a_1,\cdots,x_n=a_n$. The classical Boole-Shannon expansion also extends to tensors:
 ${\phi = \overline{x_i} \cdot \phi|_{x_i = 0} + x_i \cdot \phi|_{x_i = 1}}$, where $\overline{x_i}=1-x_i$ for $x_i\in\{0,1\}$ and 
 $\phi|_{x_i = b}$ is the restriction of the tensor $\phi$ to the case when $x_i=b$, $b\in \{0,1\}$.

A TDD is a directed acyclic graph $\mathcal{T}=(V,E,idx,val,w)$, where the node set $V$ consists of non-terminal nodes in $V_N$ and terminal ones in $V_T$ and each non-terminal node $v$ has two child nodes $low(v)$ and $high(v)$; the associated functions are $idx:V_N\rightarrow I$, $val:V_T\rightarrow\mathbb{C}$, and $w:E\rightarrow\mathbb{C}$ assigning,  respectively, node index, terminal node value, and edge weight \cite{hong2020tensor}, where $I$ is the set of indices. In the reduced and normalised form, there is a unique terminal node associated with the value $1$ and a general TDD may experience a series of normalisation and reduction procedures to be transformed into a reduced and normalised TDD. The tensor value of an evaluation $x_1=a_1,\cdots,x_n=a_n$  can be obtained by multiplying the weights of the edges along the path assigned by the evaluation.



\subsection{$S$-valued Tensor Decision Diagram}

As a PQC corresponds to a symbolic matrix, in the following, we extend TDDs with symbolic weights to give a canonical and compact representation of PQCs. More specifically, we assume that these weights are taken values from a ring $S$.

A tensor over a ring $S$, also called an $S$-valued tensor, can be defined as a map $\phi_{x_1,\cdots,x_n}:\{0,1\}^{n} \to S$ where $I=\{x_1,\cdots,x_n\}$ represents a set of indices. Similar to the conventional tensor, an $S$-valued tensor can be written as $\phi = \overline{x_i} \cdot \phi|_{x_i = 0} + x_i \cdot \phi|_{x_i = 1}$ for an index $x_i$ of the tensor. Here $x_i$ is only used to indicate the position of the elements.


Given a ring $(S,+,\cdot)$, symbolic TDDs on $S$ ($S$-TDDs for short) are an extension of TDDs. They can also be regarded as a weighted version of the Algebraic Decision Diagram \cite{bahar1997algebric}.

\begin{defn}[$S$-TDD]
	An \emph{$S$-TDD} $\mathcal{F}$ over a set of indices $I$ and a ring $(S,+,\cdot)$ is a rooted, weighted, and directed acyclic graph
	$\mathcal{F} = (V, E, index, low, high, w)$ defined as follows:
	\begin{itemize}
		\item $V$ is a finite set of nodes which is partitioned into non-terminal nodes $V_N$ and a terminal one $V_T$. Denote by $r_\mathcal{F}$ the unique root node of $\mathcal{F}$;
		\item $index: V_N \rightarrow I$ assigns each non-terminal node an index in $I$;
		\item $low$ and $high$ are mappings in $V_N \rightarrow V$ which assign each non-terminal node with its 0- and 1-successors, respectively;
		\item $E = \{(v, low(v)), (v, high(v)) : v\in V_N\}$ is the set of edges, where $(v, low(v))$ and $(v, high(v)) $ are called the low- and high-edges of $v$, respectively. For simplicity, we also assume the root node $r_\mathcal{F}$ has a unique incoming edge, denoted  $e_r$, which has no source node;
		\item $w: E\rightarrow S$ assigns each edge a weight in $S$. In particular, $w(e_r)$ is called the weight of $\mathcal{F}$, and denoted $w_\mathcal{F}$.
	\end{itemize} 
\end{defn}

Since $S$-TDD is an extension of the original TDD such that all its weights are taken values from $S$, an $S$-TDD represents an $S$-valued tensor. A value of the tensor represented by an $S$-TDD can be obtained by multiplying the weights along a path. More specifically, let $v$ be a node of an $S$-TDD $\mathcal{F}$. Then the $S$-valued tensor $\phi(v)$ represented by it is defined as follows. If $v$ is a terminal node, then $\phi(v):=value(v)$; otherwise, 
$
\phi(v) :=  \overline{x_v} \cdot w_0\cdot \phi(low(v))
+x_v \cdot w_1 \cdot \phi(high(v)),
$ 
where $x_v=index(v)$, and $w_0, w_1 \in S$ are the weights on the low- and high-edges of $v$, respectively. Then, the tensor represented by $\mathcal{F}$ itself is defined to be 
$
\phi(\mathcal{F}) := \phi(r_\mathcal{F}) \cdot w_\mathcal{F},
$
where $r_\mathcal{F}$ and $w_\mathcal{F}$ are the root node and the weight of $\mathcal{F}$, respectively.

It is worth noting that when $S$ takes the Boolean ring, each $S$-TDD is exactly a binary decision diagram (BDD); when $S$ represents the field of complex numbers, each $S$-TDD is a TDD; when $S$ represents the boolean symbolic polynomials, the $S$-TDD is symTDD \cite{hong2023decision}. In this paper, we transform all symbolic terms into trigonometric polynomials and then focus on the ring of trigonometric polynomials.



\subsection{Representation of Weights}
In the following, we give a canonical representation of the weights used for checking the equivalence of PQCs.

In our paper, $e^{i \phi}$ will be decomposed to $\cos(\phi)+i\sin(\phi)$, and $\theta/2$ will be replaced by a new symbol for example $\theta^{\prime}$. Similarly, if $\sin(\theta_1+\theta_2)$, $\cos(\theta_1+\theta_2)$ or other terms are needed, they will all be decomposed to the combination of $\sin(\theta_1), \sin(\theta_2)$ and $\cos(\theta_1), \cos(\theta_2)$. For example, the terms $e^{i\frac{\theta-\phi}{2}}$ can be represented as $\cos(\theta')\cos(\phi')+\sin(\theta')\sin(\phi')+i \sin(\theta')\cos(\phi')-i\cos(\theta')\sin(\phi')$.

As a result, each element in the symbolic matrix representation of a PQC can be replaced by a trigonometric polynomial with the form:
\begin{align}\label{eq:elm_Ctrip}
    f(\theta_0,\cdots,\theta_{n-1})=\sum_{k=0}^{K-1}{c_k\prod_{i=0}^{n-1}{ \sin(\theta_i)^{a_{ki}}\cos(\theta_i)^{b_{ki}} }},
\end{align}
were $K> 0$ is the number of terms in this polynomial, $n$ is the number of variables, $c_k \in \mathbb{C}$, and $a_{ki}, b_{ki} \in \mathbb{N}^+$ for all $k \in \{0,\cdots,K-1\}$ and $i\in \{0,\cdots,n-1\}$.

Observing the identity $\cos(\theta_i)^2+\sin(\theta_i)^2=1$, we further assume that the degree of each $\sin(\theta_i)$ term is either 0 or 1. In fact, if the degree $a_{ki}$ of $\sin(\theta_i)$ is larger than or equal to  $2$, the corresponding term $c\cdot \sin(x_i)^{a_{ki}}\cdot f$ is equivalent to $c\cdot \sin(x_i)^{a_{ki}\! \mod\! 2}\cdot [1-\cos(x_i)^2]^{\lfloor a_{ki}/2 \rfloor} \cdot f$, which can be further normalised. 

In the following, we use $f=g$ and $f=_s g$ to indicate, respectively, the functional equivalence and the syntactic equivalence of two trigonometric polynomials. More specifically, $f=g$ if and only if $f$ and $g$ depend on the same set of variables $\boldsymbol{\theta}=(\theta_0,\cdots,\theta_{n-1})$ and for any $\boldsymbol{\theta} \in \mathbb{R}^n$, $f(\boldsymbol{\theta})=g(\boldsymbol{\theta})$, and $f=_s g$ if and only if they are represented using exactly the same string.

We also give an order for the terms; for example, let $\sin(\theta_0)\prec \cos(\theta_0)\prec \cdots \prec \sin(\theta_{n-1}) \prec \cos(\theta_{n-1})$. If two trigonometric polynomials $f=g$, they must have the same smallest term that they depend on. Suppose the smallest term that $f$ and $g$ depend on is $\sin(\theta_i)$. Then $f$ and $g$ can be rewritten as $f=f_0+\sin(\theta_i)f_1$ and $g=g_0+\sin(\theta_i)g_1$, where $f_0, g_0, f_1, g_1$ are four trigonometric polynomials which do not depend on $\sin(\theta_i)$. Then $f=g$ if and only if $f_0=g_0$ and $f_1=g_1$. On the other hand, suppose the smallest term that $f$ and $g$ depend on is $\cos(\theta_i)$. Then $f$ and $g$ can be rewritten as $f=\cos(\theta)^{b_1}f_{b_1}+\cdots+\cos(\theta)^{b_K}f_{b_K}$ and $g=\cos(\theta)^{d_1}g_{d_1}+\cdots+\cos(\theta)^{d_K}g_{d_K^{\prime}}$, where all $f_{b_i}, g_{d_i}$ are trigonometric polynomials which do not depend on $\cos(\theta_i)$. Then $f=g$ if and only if $K=K^{\prime}$, $b_i=d_i$ and  $f_{b_i} = g_{d_i}$ for all $i \in \{1,\cdots K\}$. Here, we assume that $b_1<\cdots<b_K$ and $d_1<\cdots d_{K^{\prime}}$.

In this way, we can give a canonical representation $\varPhi$ of trigonometric polynomials such that $f=g$ iff $\varPhi(f) =_s \varPhi(g)$ as follows. For any complex number $c\in \mathbb{C}$, $\varPhi(c)=c$. If the smallest term that $f$ depends on is $\sin(\theta_i)$ then $\varPhi(f)=\varPhi(f_0)+\sin(\theta_i)\varPhi(f_1)$, and if the smallest term that $f$ depends on is $\cos(\theta_i)$ then $\varPhi(f)=\cos(\theta_i)^{b_1}\varPhi(f_{b_1})+\cdots+\cos(\theta_i)^{b_K}\varPhi(f_{b_K})$. Then, $\varPhi$ is a canonical representation of trigonometric polynomials.

With this representation, we are able to establish a canonical representation for PQCs. In this paper, we will further represent the trigonometric polynomials as decision diagrams, written as TrDD for short. In this representation, every internal node represents a $\sin(\theta_i)$ or $\cos(\theta_i)$ for some $i$, and then different successors represent different sub-trigonometric polynomials. Fig. \ref{fig:symbolic TDD_pqc} (b) and (c) give the TrDD representations for $(\cos(s_0)-i\cdot \sin(s_0))(\cos(s_1)+i\cdot \sin(s_1))$ and $(\cos(s_0)+i\cdot \sin(s_0))(\cos(s_1)-i\cdot \sin(s_1))$.




\subsection{Normalisation and Reduction}

With the canonical representation of trigonometric polynomials, we pave the way for making an $S$-TDD canonical. In the following, we describe the normalisation and reduction procedures.

Normalisation is one of the necessary procedures to make the representation canonical. To do the normalisation of two trigonometric polynomials, we need to define an order between the terms in these polynomials, that is 
\begin{align*}
   \prod_{i=0}^{n-1}{\sin(\theta_i)^{a_{ki}}\cos(\theta_i)^{b_{ki}}} &\prec \prod_{i=0}^{n-1}{\sin(\theta_i)^{a_{li}}\cos(\theta_i)^{b_{li}}}\quad  \mbox{iff} \\ (a_{k0},b_{k0},\cdots, a_{kn-1},b_{kn-1}) &< (a_{m0},b_{m0},\cdots, a_{mn-1},b_{mn-1}) 
\end{align*}
 in the lexicographical order.

The normalisation of $S$-TDD is implemented by extracting the common factor from the two weights (trigonometric polynomial) on the two outgoing edges of every node and putting the common factor onto the incoming edges of the node. In the following, we denote the normalisation procedure as $loc\_norm(f,g)=(h,f^{\prime},g^{\prime})$, where $f, g$ are the original weights, $h$ is the common factor and $f^{\prime},g^{\prime})$ are the remained weights. In our scheme, $h$ is taken to be the biggest term of form $\prod_{i=0}^{n-1} \sin(\theta_i)^{a_i}\cos(\theta_i)^{b_i}$, such that $h|f$, and $h|g$.



An $S$-TDD $\mathcal{T}$ is called fully normalised if for each internal node $v$ with two outgoing edges weights $f_v$ and $g_v$ and any path leading to it with accumulated weights $h_v$ we have $loc\_norm(h_v\cdot f_v,h_v\cdot g_v)=(h_v,f_v,g_v)$.

We further use the following two rules to reduce the nodes of the representation and make the representation canonical and compact.

\begin{itemize}
    \item RR1: Delete a node $v$ if its 0- and 1-successors are both $w$ and its low- and high-edges have the same weight $f$. Meanwhile, redirect the incoming edge of $v$ to $w$, and multiply the weights of these edges by $f$.
    \item RR2: Merge two nodes if they have the same index, the same 0- and 1-successors, and the same weights on the corresponding edges. 
\end{itemize}

An $S$-TDD is called reduced if it is fully normalised and no reduction rule can be further applied.

The following theorem shows that there is, up to isomorphism and a given index order, a unique reduced $S$-TDD for each $S$-valued tensor. Two $S$-TDDs $\mathcal{F}$ and $\mathcal{G}$ w.r.t. the same index order are \emph{isomorphic} if there is a bijection $\sigma$ between the node sets of $\mathcal{F}$ and $\mathcal{G}$ such that, for each node $v$, $v'=\sigma(v)$ and $v$ have the same index, the same weights on their incoming and outgoing edges, and $\sigma(low(v))=low(v')$, $\sigma(high(v))=high(v')$.

\begin{thm}[canonicity]\label{thm:canonicity}
Let $S$ be the ring of trigonometric polynomials, $I=\{x_1,\ldots,x_{n}\}$ be a set of indices. Suppose $\phi$ is an $S$-valued tensor.  Given any index order $\prec$ of $I$, $\phi$ has a unique reduced $S$-TDD representation w.r.t. $\prec$ up to isomorphism.
\end{thm}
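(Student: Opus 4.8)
The plan is to split canonicity into its two halves: \emph{existence} of a reduced $S$-TDD for every $S$-valued tensor $\phi$, and \emph{uniqueness} up to isomorphism. Both directions proceed by induction on the number of indices $n=|I|$, and both lean on two facts already available in the excerpt: the Boole--Shannon expansion $\phi=\overline{x_i}\cdot\phi|_{x_i=0}+x_i\cdot\phi|_{x_i=1}$ for $S$-valued tensors, and the canonical form $\varPhi$ of trigonometric polynomials with $f=g$ iff $\varPhi(f)=_s\varPhi(g)$, together with the unique terminal node of value $1$.

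For existence I would start from the full decision tree produced by applying the Boole--Shannon expansion repeatedly along the order $\prec$, with leaves labelled by the constant cofactors. Rewriting every weight through $\varPhi$, applying $loc\_norm$ bottom-up to reach full normalisation, and then exhaustively applying RR1 and RR2 yields a reduced $S$-TDD. Two routine checks close this direction: each operation preserves the represented tensor (so the output still represents $\phi$), and the rewriting terminates (the tree is finite and each rule strictly decreases either the node count or a complexity measure of the weights).

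Uniqueness is the substantive part. Let $\mathcal{F},\mathcal{G}$ be reduced $S$-TDDs both representing $\phi$; I would construct an isomorphism $\sigma$ by induction on $n$. The base case $n=0$ is a constant tensor $c\in\mathbb{C}$: there is one terminal node of value $1$, and by canonicity of $\varPhi$ the incoming weight of each diagram is forced to be $\varPhi(c)$, so the diagrams coincide. For the inductive step, the first claim is that both roots carry the $\prec$-least index $x$ on which $\phi$ genuinely depends. This is where reducedness does the work: if a root had a smaller index $y\prec x$ then $\phi|_{y=0}=\phi|_{y=1}$, so by the induction hypothesis the two successor sub-diagrams are isomorphic and hence merged by RR2, while equality of these cofactors forces the two outgoing weights to coincide (cancelling the common nonzero factor $\phi$ of the shared successor, using that $S$ is an integral domain); RR1 would then apply, contradicting reducedness. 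Granting equal root indices, I would argue that the root weight and the two outgoing weights agree across $\mathcal{F}$ and $\mathcal{G}$: the cofactors $\phi|_{x=0},\phi|_{x=1}$ are determined by $\phi$, full normalisation fixes the extracted common factor through $loc\_norm$, and $\varPhi$ fixes the residual weights. The low (resp.\ high) sub-diagrams then represent the same tensor over $I\setminus\{x\}$, so the induction hypothesis supplies isomorphisms on them; gluing these with the root identification gives $\sigma$, and RR2 guarantees it is well defined since no two distinct nodes represent the same normalised sub-tensor.

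The main obstacle is the step asserting that normalisation yields a \emph{canonical} weight split: that, given $\phi$ and its least index $x$, the triple of root weight, low weight, and high weight is uniquely determined. I would isolate this as a lemma resting on two algebraic facts about the ring $S$ of trigonometric polynomials that the excerpt does not yet establish. First, $S$ is an integral domain (needed to cancel the common factor in the root-index argument above). Second, $loc\_norm$ extracts a \emph{unique} largest monomial $\prod_i\sin(\theta_i)^{a_i}\cos(\theta_i)^{b_i}$ dividing two given weights, so that divisibility and the monomial GCD are well defined. Both follow from the observation that, after the decomposition of complex exponentials and the angle-halving substitution described earlier, $S$ is isomorphic to a ring of multivariate Laurent polynomials via $t_i=\cos\theta_i+i\sin\theta_i$, which is a UFD; this pins down the split when combined with the canonicity of $\varPhi$. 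Once this lemma is in hand, the inductive gluing above is routine, and I would expect essentially no further difficulty.
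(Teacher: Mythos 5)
Your proposal is correct and takes essentially the same route as the paper's proof: induction on the number of indices (the paper's ``rank''), showing both root nodes carry the first essential index, applying the induction hypothesis to the weighted sub-diagrams, and forcing the root weight triples to agree because full normalisation makes $(h,f,g)$ a fixed point of $loc\_norm$ on the accumulated products. Where you go beyond the paper is supplementary rather than a different route: the paper omits the existence half, dismisses the inessential-index case with ``without loss of generality'' where you argue it via RR1/RR2 and cancellation, and silently assumes what your lemma makes explicit --- that $S$ is an integral domain in which the maximal dividing monomial of two weights is well defined, which your Laurent-polynomial isomorphism $t_i=\cos\theta_i+i\sin\theta_i$ establishes.
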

\begin{proof} 
We prove this by using induction on the rank of the tensor. Let $\phi=f$ be a rank 0 tensor. Any reduced $S$-TDD of $\phi$ has only one node labelled with $1$ and has weight $f$. Suppose the statement holds for any rank $k$ tensor. Let $\phi$ be a rank $k+1$ tensor. Suppose $\mathcal{F}$, $\mathcal{G}$ are two reduced $S$-TDDs that represent $\phi$ w.r.t. $\prec$. We show they are isomorphic. Let $q$ be the first index under $\prec$. Without loss of generality, we assume $\phi|_{q=0}\neq \phi|_{q=1}$. That is, $q$ is essential in $\phi$.

Write $r_\mathcal{F}$ and $r_\mathcal{G}$ for the root nodes of $\mathcal{F}$ and $\mathcal{G}$.  Then $q$ is the index of both $r_\mathcal{F}$ and $r_\mathcal{G}$. Let $h_1,f_1,g_1$ and $h_2,f_2,g_2$ be the weights of the incoming and two outgoing edges of, respectively, $r_\mathcal{F}$ and $r_\mathcal{G}$. We write $\mathcal{F}_{0}$ and $\mathcal{F}_1$ for the sub-DDs of $\mathcal{F}$ with root nodes the 0- and 1-successors of $r_\mathcal{F}$ and weights $h_1\cdot f_1$ and $h_1\cdot g_1$, respectively. 
The two sub-DDs $\mathcal{G}_0$ and $\mathcal{G}_1$ of $\mathcal{G}$ are defined in a similar way. 
We assert that $\mathcal{F}_{i}$ and $\mathcal{G}_{i}$ are all reduced. Apparently, no reduction rule can be applied as they are sub-DDs of reduced $S$-TDDs. On the other hand, since $\mathcal{F}$ is fully normalised, and through the definition of fully normalisation, we know that the accumulated weight $h_1\cdot f_1$ and $h_1\cdot g_1$ can not be spread downwards. Thus, $\mathcal{F}_{i}$ and $\mathcal{G}_{i}$ are all fully normalised.

It is easy to see that $\mathcal{F}_0$ and $\mathcal{G}_0$ represent the same tensor $\phi|_{q=0}$ and hence are isomorphic by induction hypothesis. Analogously, $\mathcal{F}_1$ and $\mathcal{G}_1$ represent the same tensor $\phi|_{q=1}$ and are also isomorphic.

Let $\sigma_i$ be the isomorphism between $\mathcal{F}_i$ and $\mathcal{G}_i$ for $i=1,2$. Because $\mathcal{F}$ is reduced, for any node $v$ in both $\mathcal{F}_1$ and $\mathcal{F}_2$, we have $\sigma_1(v)=\sigma_2(v)$. Define $\sigma$ as the extension of $\sigma_1$ and $\sigma_2$ by further mapping $r_\mathcal{F}$ to $r_\mathcal{G}$. We show $\sigma:\mathcal{F}\to \mathcal{G}$ is an isomorphism. To this end, we need only show $h_1= h_2$, $f_1= f_2$, and $g_1= g_2$. Since $\mathcal{F}_i\cong\mathcal{G}_i$, we have $h_1\cdot f_1=h_2\cdot f_2$ and $h_1\cdot g_1= h_2\cdot g_2$. Because $\mathcal{F}$, $\mathcal{G}$ are normalised, we have  $(h_1,f_1,g_1)=loc\_norm(h_1\cdot f_1,h_1\cdot g_1)=loc\_norm(h_2\cdot f_2,h_2\cdot g_2)=(h_2,f_2,g_2)$. This shows $\mathcal{F}\cong  \mathcal{G}$.

\end{proof}

\section{Equivalence Checking of PQCs}\label{sec:ec_vqc}

In this section, we consider the equivalence checking of PQCs. The semantics of a PQC $C$ is given in \cite{xu2022quartz}, which is defined to be a mapping $\sem{C}: \mathbb{R}^m \rightarrow \mathbb{C}^{2^n\times 2^n}$, where $n$ is the number of qubits and $m$ is the number of parameters. Then, the equivalence of two PQCs $C_1$ and $C_2$ are defined to be:
$$\forall \vec{p} \in \mathbb{R}^m, \exists \beta \in \mathbb{R}, s.t., \sem{C_1}(\vec{p})=e^{i\beta}\sem{C_2}(\vec{p}),$$
 which means that for any given values of $\vec{p} \in \mathbb{R}^m$, the two circuits are equivalent up to a global phase \cite{xu2022quartz}.

In \cite{xu2022quartz}, the two circuits are represented by symbolic matrices, and the equivalence is checked using the SMT solver over the theory of nonlinear real arithmetic. In our method, the equivalence of PQCs can be checked by comparing their $S$-TDD representations.

\begin{figure}[h]
\centerline{
\begin{quantikz}[column sep=0.28cm,row sep=0.4cm]
   &\qw &\ctrl{1} &\gate{X} &\gate{R_z(\theta-\phi)}  &\qw\\
   &\qw &\gate{X} &\qw      &\qw             &\qw     \\
\end{quantikz}
}
\caption{A PQC equivalent to the circuit shown in Fig. \ref{fig:cir_vqc1}.}
\label{fig:cir_vqc2}
\end{figure}
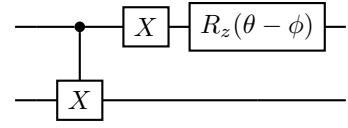

\begin{figure}[h]
    \centering
    \subfigure[]{
   \includegraphics[width=0.125\textwidth]{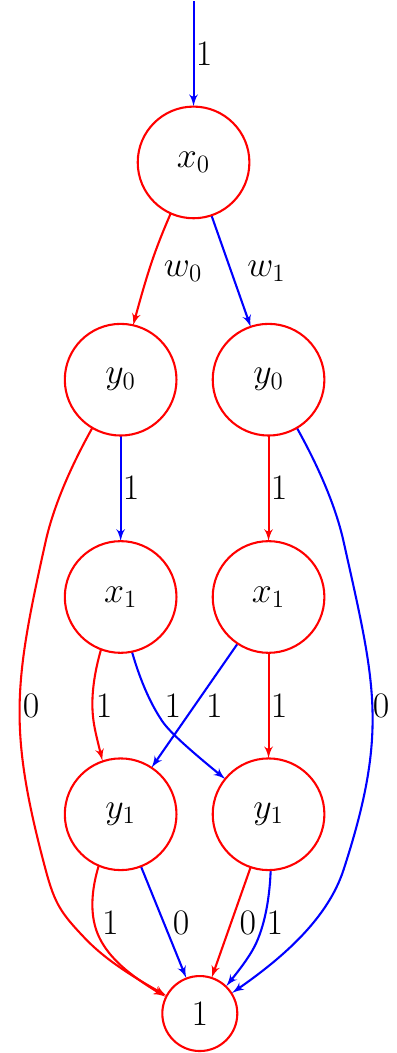}
    }
    \subfigure[]{
    \includegraphics[width=0.1\textwidth]{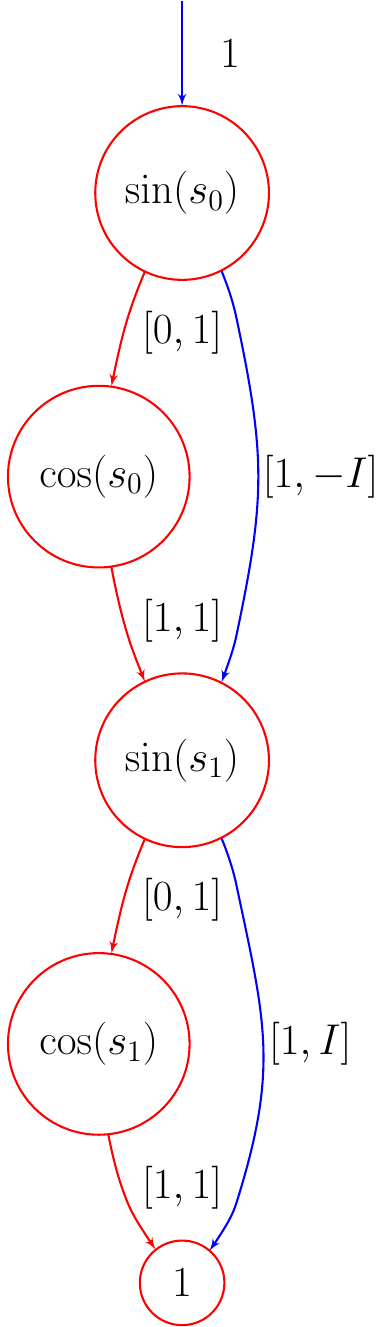}
    }  
    \subfigure[]{
    \includegraphics[width=0.1\textwidth]{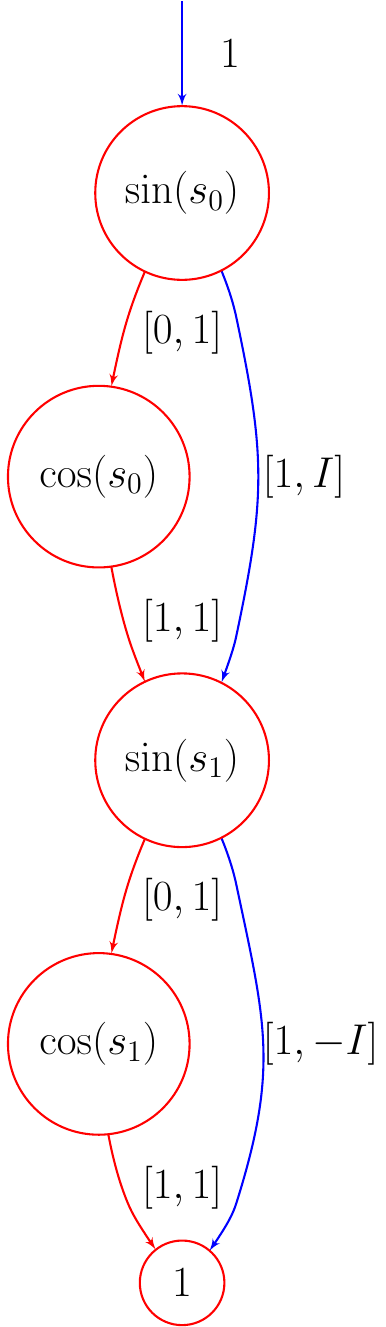}
    }    
    \caption{(a) The symbolic TDD of the PQC shown in Fig. \ref{fig:cir_vqc1} and Fig. \ref{fig:cir_vqc2}; (b) The TrDD representation of $w_0=(\cos(s_0)-i\cdot \sin(s_0))(\cos(s_1)+i\cdot \sin(s_1))$, (c) The TrDD representation of $w_1=(\cos(s_0)+i\cdot \sin(s_0))(\cos(s_1)-i\cdot \sin(s_1))$. The list $[i,c], i\in \mathcal{Z}, c\in \mathcal{C}$ here represent the degree and the coefficient of the term in the node, respectively.}
    \label{fig:symbolic TDD_pqc}
\end{figure}

For example, Fig. \ref{fig:cir_vqc2} gives a PQC equivalent to the circuit shown in Fig. \ref{fig:cir_vqc1}. Both circuits have the $S$-TDD representation as shown in Fig. \ref{fig:symbolic TDD_pqc} (a), where the TrDD representations of $w_0=(\cos(s_0)-i\cdot \sin(s_0))(\cos(s_1)+i\cdot \sin(s_1))$ and $w_1=(\cos(s_0)+i\cdot \sin(s_0))(\cos(s_1)-i\cdot \sin(s_1))$ are shown in Fig. \ref{fig:symbolic TDD_pqc} (b) and Fig. \ref{fig:symbolic TDD_pqc} (c) respectively. This shows that they are equivalent.

However, one more efficient method is to use the technique proposed in \cite{burgholzer2020improved}. The basic idea is to take quantum gates $U_i, V_j$ alternately from the two quantum circuits and then calculate $V_j^{\dagger}U_i$ such that they are as close as an identity matrix. Since identity can be represented very compactly -- normally in $n$ nodes for an $n$ qubits quantum system, this method can reduce the time and space consumption significantly. In our paper, we use the same idea to check the equivalence of two PQCs. We also notice that, in most cases, compilation does not change the order of the parameters that appear in the circuit. In other words, two circuits before and after the compilation can be represented as $U=U_1P_1\cdots U_mP_mU_{m+1}$ and $V=V_1P^{\prime}_1\cdots V_mP_m^{\prime}V_{m+1}$. Here, $U_i, V_j$ are unitary matrices and $P_i$, $P_i^\prime$ are parameterised gates that depend on the same symbol $\theta_i$. Thus, we will first calculate $V_1^{\dagger}U_1$ and then $P_1^{\prime \dagger}V_1^{\dagger}U_1P_1$. After that, we check if the representation is independent with $\theta_1$. If it is so or $\theta_1$ only appears at the weight of the $S$-TDD (serves as the global phase) we continue the process; otherwise, we are certain that the two circuits are not equivalent.


\section{Numerical Experiments}\label{sec:experiments}

In this section, we present the experimental results to demonstrate the efficiency of our method. The experiments were performed on a Standard D8s v3 Azure server equipped with 8 vCPUs and 32 GiB of RAM, and the algorithm was implemented using Python3 and based on the Python implementation of TDD (see \url{https://github.com/Veriqc/TDD}).

We compare our results with the ZX-calculus-based method QCEC \cite{peham22equivalence} for checking the equivalence of PQCs. The benchmark circuits used in the experiments were obtained from the Qiskit circuit library and transpiled into two equivalent circuits. We construct inequivalent circuits by randomly inserting X gates or Z gates into the circuits, as bit flip and phase flip errors are the most common errors for quantum computing. In our experiments, we set the error rate to be 1\%. We check the equivalence of two PQCs using the method proposed above. The benchmark names followed the format ``function name\_qubit number\_repeat times''. The experiment results are presented in Tables~\ref{table:Equivalent check data} and \ref{table: inequivalent check data} below.

\begin{table*}[h!]
\caption{Equivalent check results}
\label{table:Equivalent check data}
\centering
\scalebox{0.9}{
\begin{tabular}{cccccccccccc}
\hline
Benchmark   Name       & $|P|$ & $|Q|$ & $|G1|$ & $|G2|$ & Node\_max & Node\_final & Node\_tdd     & Node\_trdd  & Equivalent  & Time    & QCEC Time \\ \hline
EfficientSU2\_10\_60   & 610 & 10 & 2670  & 2430  & 40  & 31  & 12620  & 3051 & equivalent  & 4.25   & 0.30  \\
RealAmplitudes\_10\_60 & 610 & 10 & 5470  & 3310  & 55  & 31  & 16996  & 3051 & equivalent  & 6.11   & 3.14  \\
TwoLocal\_10\_60       & 610 & 10 & 3610  & 1210  & 40  & 31  & 12483  & 3051 & equivalent  & 3.77   & 0.36  \\
EfficientSU2\_16\_40   & 656 & 16 & 2768  & 2608  & 66  & 49  & 19500  & 3281 & equivalent  & 5.52   & 0.33  \\
RealAmplitudes\_16\_40 & 656 & 16 & 7856  & 5456  & 91  & 49  & 31220  & 3281 & equivalent  & 10.72  & 44.32 \\
TwoLocal\_16\_40       & 656 & 16 & 3856  & 1296  & 66  & 49  & 19054  & 3281 & equivalent  & 5.08   & 0.43  \\
EfficientSU2\_27\_30   & 837 & 27 & 3441  & 3321  & 110 & 82  & 36921  & 4186 & equivalent  & 8.94   & 0.57  \\
RealAmplitudes\_27\_30 & 837 & 27 & 14487 & 11367 & 188 & 82  & 83226  & 4186 & equivalent  & 26.14  & 3.61  \\
TwoLocal\_27\_30       & 837 & 27 & 4887  & 1647  & 110 & 82  & 35906  & 4186 & equivalent  & 8.21   & 7.26  \\
EfficientSU2\_32\_20   & 672 & 32 & 2736  & 2656  & 130 & 97  & 36084  & 3361 & equivalent  & 8.10   & 0.41  \\
RealAmplitudes\_32\_20 & 672 & 32 & 13072 & 10592 & 264 & 97  & 110831 & 3361 & equivalent  & 32.62  & 1.72  \\
TwoLocal\_32\_20       & 672 & 32 & 3872  & 1312  & 130 & 97  & 34726  & 3361 & equivalent  & 7.51   & 0.66  \\
EfficientSU2\_65\_13   & 910 & 65 & 3627  & 3575  & 262 & 196 & 96699  & 4551 & equivalent* & 22.24  & 0.74  \\
RealAmplitudes\_65\_13 & 910 & 65 & 31278 & 27950 & 682 & 196 & 775591 & 4551 & equivalent  & 151.36 & 5.35  \\
TwoLocal\_65\_13       & 910 & 65 & 5135  & 1755  & 262 & 196 & 91852  & 4551 & equivalent* & 20.67  & 2.17     \\ \hline
\end{tabular}
}
\end{table*}

\begin{table*}[h!]
\caption{inequivalent check results}
\label{table: inequivalent check data}
\centering
\scalebox{0.84}{
\begin{tabular}{ccccccccccccccc}
\hline
                       &                &            &            &            &  & \multicolumn{4}{c}{Bit Flip Error}                             &  & \multicolumn{4}{c}{Phase Flip Error}                                  \\ \cline{1-5} \cline{7-10} \cline{12-15}
Benchmark Name         & $|P|$ & $|Q|$ & $|G1|$ & $|G2|$ &  & Node\_tdd & Node\_trdd & Time  & QCEC Time &  & Node\_tdd & Node\_trdd & Time  & QCEC Time \\ \hline
EfficientSU2\_10\_60   & 610 & 10 & 2670  & 2454  &  & 659    & 67  & 0.16  & 0.20  &  & 1261  & 107 & 0.29 & 0.31   \\
RealAmplitudes\_10\_60 & 610 & 10 & 5470  & 3343  &  & 1749   & 102 & 0.58  & 4.99  &  & 2186  & 142 & 0.46 & 5.12   \\
TwoLocal\_10\_60       & 610 & 10 & 3610  & 1222  &  & 1591   & 277 & 0.46  & 0.37  &  & 1369  & 222 & 0.37 & 3.57   \\
EfficientSU2\_16\_40   & 656 & 16 & 2768  & 2634  &  & 827    & 92  & 0.24  & 0.40  &  & 1459  & 112 & 0.32 & 0.34   \\
RealAmplitudes\_16\_40 & 656 & 16 & 7856  & 5510  &  & 1354   & 87  & 0.35  & 17.94 &  & 1083  & 87  & 0.63 & 53.03  \\
TwoLocal\_16\_40       & 656 & 16 & 3856  & 1308  &  & 2174   & 182 & 0.71  & 1.72  &  & 1881  & 152 & 0.37 & 191.21 \\
EfficientSU2\_27\_30   & 837 & 27 & 3441  & 3354  &  & 1593   & 147 & 0.40  & 0.83  &  & 814   & 102 & 0.51 & 0.68   \\
RealAmplitudes\_27\_30 & 837 & 27 & 14487 & 11480 &  & 10392  & 142 & 2.31  & 2.57  &  & 10338 & 142 & 2.52 & 300.53 \\
TwoLocal\_27\_30       & 837 & 27 & 4887  & 1663  &  & 5856   & 387 & 1.49  & 11.01 &  & 3590  & 202 & 0.89 & TO     \\
EfficientSU2\_32\_20   & 672 & 32 & 2736  & 2682  &  & 3116   & 192 & 0.94  & 0.68  &  & 1027  & 127 & 0.31 & 0.62   \\
RealAmplitudes\_32\_20 & 672 & 32 & 13072 & 10697 &  & 20734  & 147 & 4.16  & 3.11  &  & 24540 & 252 & 4.49 & TO     \\
TwoLocal\_32\_20       & 672 & 32 & 3872  & 1325  &  & 7975   & 342 & 1.96  & 2.66  &  & 884   & 127 & 0.20 & TO     \\
EfficientSU2\_65\_13   & 910 & 65 & 3627  & 3610  &  & 2460   & 297 & 0.82  & 2.14  &  & 4068  & 337 & 1.58 & 1.79   \\
RealAmplitudes\_65\_13 & 910 & 65 & 31278 & 28229 &  & 297452 & 332 & 52.37 & 9.75  &  & 1306  & 237 & 0.33 & TO     \\
TwoLocal\_65\_13       & 910 & 65 & 5135  & 1772  &  & 2012   & 292 & 0.75  & 7.42  &  & 2171  & 292 & 0.53 & TO   \\ \hline
\end{tabular}
}
\end{table*}

For the equivalent cases, we record the number of parameters ($|P|$), qubits ($|Q|$) and gates in the two circuits ($|G1|$, $|G2|$) in the second to fifth column. Then, we record the maximum number of nodes and the final number of nodes of the $S$-TDD used to check the equivalence in the sixth and seventh columns. The columns Node\_tdd and Node\_trdd record all the number of nodes used to represent the $S$-TDD and TrDD during the calculation process. The results are then recorded in the tenth column. Here, we use equivalent$^*$ to represent the two circuits equivalent up to a global phase. The last two columns record the time consumption (seconds) of our method and QCEC. The data in Table~\ref{table: inequivalent check data} have similar meanings.

As shown in Table~\ref{table:Equivalent check data}, our proposed method demonstrates excellent performance when applied to PQCs. Specifically, we successfully verified the equivalence of two PQCs for RealAmplitudes\_16\_40, each consisting of 16 qubits and 656 parameters, while circuit 1 contains 7856 gates, and circuit 2 contains 5456 gates. The verification was accomplished within 10.72 seconds, while it took 44.32 seconds for QCEC.

In addition, we observed that our method requires only minimal resources for representing trigonometric polynomials. Hence, our findings indicate that the equivalence of PQCs can be verified with the same level of efficiency as conventional quantum circuits when employing decision diagram-based methods. 
Moreover, for inequivalent circuits, we can obtain the result before finishing the calculating process, thus making it even more efficient. We successfully verified the inequivalence of two PQCs for TwoLocal\_65\_13, each consisting of 65 qubits and 910 parameters, while circuit 1 contains 5153 gates, and circuit 2 contains 1772 gates. The verification process took 0.75 and 0.53 seconds, respectively, while it took 7.42 seconds and even timed out for QCEC. Here, we set time out to be 1200 seconds.


It is worth noting that while our method does not exhibit consistently faster performance than QCEC, the situation will be different if we build our algorithm on the C++ implementation of TDD, which is 1-2 orders of magnitude faster than the Python version of TDD. Furthermore, we observe that our method already outperforms QCEC by a large margin in terms of speed in most of the inequivalence cases.

\section{Conclusion}\label{sec:conclusion}

In our research, we have extensively investigated the intricate challenge of equivalence checking for parameterised quantum circuits (PQCs).  Our innovative approach revolves around a concise and standardized representation known as the $S$-TDD.  This decision graph effectively captures the complex functionality of parameterised quantum circuits, bridging the gap between theoretical models and practical implementations.  We propose a novel algorithm for equivalence checking, which has been experimentally verified to be highly effective.  While it may not surpass the ZX calculus employed by QCEC in verifying circuit equivalence, our approach truly excels when dealing with non-equivalent circuits.  The canonical reduction property of $S$-TDD ensures that there are no false positives, eliminating the need for additional examples to confirm non-equivalence.  Looking ahead, we envision leveraging $S$-TDD to explore more expressive structures within parametric quantum circuits.  Our focus extends to applications such as variational quantum eigensolvers, quantum approximate optimization algorithms, and quantum machine learning.  By advancing the field of equivalence checking, we contribute to enhancing the robustness and reliability of quantum computing systems.

\bibliographystyle{IEEEtran}

\bibliography{referencespqc}

\subsection*{Appendix I: TrDD}\label{sec:appendix3}

TrDD is a graph depicting the representation $\varPhi$. But here, we will also use a weighted version of the representation and the reduction rules as shown in section \ref{sec:re_vqc} will also be used.

\begin{defn}[TrDD]
A TrDD $\mathcal{F}$ over a set of variables $X$ is a rooted, weighted, and directed acyclic graph 
	$\mathcal{F} = (V, E, succ, label, degree, weight)$ defined as follows:
	\begin{itemize}
		\item $V$ is a finite set of nodes, including a set of non-terminal nodes $V_N$ and a single terminal node $V_T$ labelled with an integer $1$;
		\item $label$ assigns each non-terminal node an label $\sin(x)$ or $\cos(x)$ for some $x \in X$;		
		\item $succ$ is a mapping in $V_N \rightarrow V$ which assigns each non-terminal node with its successors;
		\item $E = \{(v, succ(v)) : v\in V_N\} \cup {e_r}$ is the set of edges, and $e_r$ is the edge point to the unique root node of this decision diagram $v_r$;
		\item $degree: E\setminus e_r \rightarrow \mathbb{N}$ assigns a degree to each edge;		
		\item $weight: E\rightarrow \mathbb{C}$ assigns each edge a complex value weight.
	\end{itemize} 
\end{defn}

The trigonometric function $T_{\mathcal{F}}$ represented by a TrDD $\mathcal{F}$ can be defined by $T_{\mathcal{F}}=weight(e_r)*T_{v_r}$, where $T_{v_r}$ is the trigonometric function represented by the root node of $\mathcal{F}$, and
(1) if $v_r$ is the terminal node, then $T_{v_r}=1$;
else, 
(2) $T_{v_r}=\sum_{v_s \in succ(v)}{w_s*label(v_r)^{d_s}*T_{v_s}}$, where $w_s=weight((v_r,v_s))$ and $d_s=degree((v_r,v_s))$.

To make this representation unique, we make sure that for any non-terminal node $v$ and $v_1, v_2 \in succ(v)$, $degree((v,v_1))<degree((v,v_2))$ if $(v,v_1) \prec (v,v_2)$, i.e., the edge $(v,v_1)$ is on the left of $(v,v_2)$ on this graph. And also $weight((v,v_0))=1$, where $(v,v_0)$ is the leftmost edge rooted at $v$. This can be achieved by multiplying the weight on the incoming edge of $v$ by the original $weight((v,v_0))$ and divide the weights $weight((v,v_i))$ by $weight((v,v_0))$ for all $v_i \in succ(v)$. Also, there is an order of all this variables, for example $x_0 \prec x_1\prec \cdots \prec x_{n-1}$, then we can also set an order for all the terms $\sin(x_i)$ and $\cos(x_i)$, say $\sin(x_0)\prec \cos(x_0) \prec \sin(x_1) \prec \cos(x_1)\prec \cdots$. Thus, in our representation all nodes alone a path should respect this order. In addition, we suppose that all nodes and edges that represent the same functionality are merged. For element in $S$, we also ask that $degree((v,v_1))<2$ for all nodes with $label(v)=\sin(x_i)$. Furthermore, we will use similar reduction rules as shown in section \ref{sec:re_vqc} to make the representation as compact as possible.

\begin{exam}
Figure \ref{fig:TrDD_exp} gives an example of TrDD, which represents $1\cdot [1 \cdot \sin(x_0)^0\cdot(1\cdot \cos(x_0)^1 \cdot(1\cdot \cos(x_1)^1)))+i\cdot \sin(x_0)^1\cdot (1\cdot \sin(x_1)^1)]=i\cdot \sin(x_0)\sin(x_1)+\cos(x_0)\cos(x_1)$.
\end{exam}

\begin{figure}[h]
    \centering
    \includegraphics[width=0.1\textwidth]{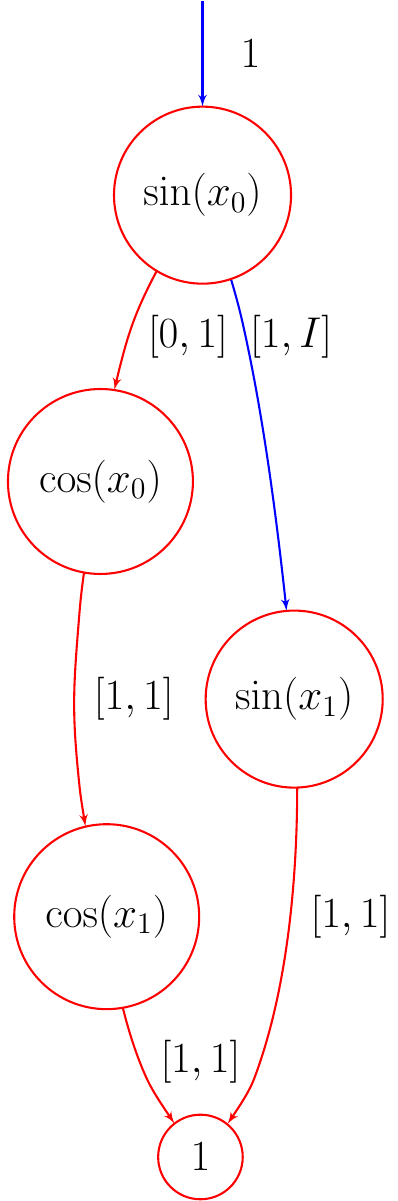}
    \caption{The TrDD representation of $i\cdot \sin(x0)\sin(x1)+\cos(x0)\cos(x1)$.}
    \label{fig:TrDD_exp}
\end{figure}

\begin{thm}\label{thm:trdd_canonicity}
Every element in $S$ can be represented uniquely using TrDD.
\end{thm}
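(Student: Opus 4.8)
The plan is to prove existence and uniqueness separately, both by structural induction following the recursive definition of the canonical form $\varPhi$, in close analogy with the proof of Theorem~\ref{thm:canonicity}. Existence is essentially a matter of reading a diagram off from $\varPhi$: the terminal node encodes the constant $1$, and each recursive clause of $\varPhi$ prescribes one internal node. If the smallest term of $f$ under $\prec$ is $\sin(\theta_i)$, I would make the root a $\sin(\theta_i)$-node with two successor edges, of degrees $0$ and $1$, feeding the TrDDs of $f_0$ and $f_1$; if it is $\cos(\theta_i)$, the root becomes a $\cos(\theta_i)$-node with successor edges of strictly increasing degrees $b_1<\cdots<b_K$ feeding the TrDDs of $f_{b_1},\dots,f_{b_K}$. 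Dividing out the leftmost edge weight onto the incoming edge (so that the leftmost weight is $1$) and applying the reduction rules of Section~\ref{sec:re_vqc} then yields a normalised TrDD for $f$.

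For uniqueness I would take two reduced TrDDs $\mathcal{F},\mathcal{G}$ representing the same $f$ and show $\mathcal{F}\cong\mathcal{G}$ by induction on the number of distinct terms---that is, factors $\sin(\theta_i)$ or $\cos(\theta_i)$---that $f$ depends on. The base case, $f$ a constant, is a single terminal node whose incoming weight must equal $f$. In the inductive step the root label is forced, because the smallest such term under $\prec$ is an invariant of $f$ (this is precisely what the canonicity of $\varPhi$ asserts: $f=g$ iff $\varPhi(f)=_s\varPhi(g)$), so both roots carry the same label. I would then show the successor structure is forced: in the $\sin(\theta_i)$ case the splitting $f=f_0+\sin(\theta_i)f_1$ into parts independent of $\sin(\theta_i)$ is unique, and in the $\cos(\theta_i)$ case the exponent set $\{b_k\}$ together with the coefficients $f_{b_k}$ is unique. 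Each of these sub-polynomials depends only on terms strictly above the root's term, hence on strictly fewer terms, so the induction hypothesis matches the successors of $\mathcal{F}$ and $\mathcal{G}$ one by one. Finally, the normalisation convention (leftmost edge weight $1$, common scalar pushed onto the incoming edge) determines every complex weight once the functional decomposition is fixed, and the reduction rules forbid any redundant node, giving $\mathcal{F}\cong\mathcal{G}$.

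The step I expect to demand the most care is justifying that the $\cos(\theta_i)$-decomposition is genuinely unique, i.e.\ that the powers $\cos(\theta_i)^{b_1},\dots,\cos(\theta_i)^{b_K}$ are linearly independent over the ring of trigonometric polynomials in the remaining variables, which is what is needed to recover both the degrees and the coefficients. The cleanest way to see this is to freeze the other variables at arbitrary real values, turning the hypothetical relation $\sum_k \cos(\theta_i)^{b_k}f_{b_k}=0$ into a polynomial identity in the single real quantity $\cos(\theta_i)$; since a nonzero real polynomial has only finitely many roots while $\cos(\theta_i)$ sweeps an interval, every coefficient must vanish, and as this holds for all values of the other variables each $f_{b_k}$ is identically $0$. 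The same trick handles the $\sin(\theta_i)$ case once the $\sin^2=1-\cos^2$ reduction has capped the $\sin(\theta_i)$-degree at $1$. This linear independence is exactly the fact underlying the canonicity of $\varPhi$, so the weighted TrDD inherits its uniqueness, with the weight-normalisation and reduction bookkeeping being routine.
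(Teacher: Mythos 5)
Your proof is correct and follows essentially the same route as the paper's: existence by reading a TrDD off the recursive structure of $\varPhi$ (root labelled by the smallest term, successors given by the sub-polynomials $f_0,f_1$ or $f_{b_1},\dots,f_{b_K}$), and uniqueness by structural induction, with the normalisation convention (leftmost edge weight $1$, common factor pushed to the incoming edge) pinning down all weights once the functional decomposition is fixed. The one place you go beyond the paper is worth noting: the paper declares the theorem a ``direct corollary of the canonicity of $\varPhi$'' but never actually proves that $\varPhi$ is canonical---the uniqueness of the decompositions $f=f_0+\sin(\theta_i)f_1$ and $f=\sum_k\cos(\theta_i)^{b_k}f_{b_k}$ is only asserted in Section \ref{sec:re_vqc}---whereas your linear-independence argument (freeze the remaining variables, observe that a nonzero real polynomial cannot vanish on all of $[-1,1]$ as $\cos(\theta_i)$ sweeps an interval) supplies exactly the missing fact on which both proofs ultimately rest. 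So your write-up is the same strategy, but more self-contained.
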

\begin{proof}
The canonicity of TrDD is a direct corollary of the canonicity of the representation $\varPhi$.

First, all elements in the trigonometric polynomial ring $S$ can be represented as a TrDD. 

Then, suppose $f$ is an element in $S$. If it is a constant $c \in \mathbb{C}$, it will be represented as a TrDD with only one node, i.e., the terminal node, and with the incoming edge weight $c$. Then, without loss of generality, suppose the smallest term among $\sin(x_0)\prec \cos(x_0) \prec \sin(x_1) \prec \cos(x_1)\prec \cdots$ that $f$ depends on is $\sin(x_i)$, and suppose that all elements which depend only on terms larger than $\sin(x_i)$ can be represented uniquely. Notice that the biggest degree of $\sin(x_i)$ in $f$ is 1. Then $f$ can be written as $f=\sin(x_i)^0*f_0+\sin(x_i)^1*f_1$, and $f_0$, $f_1$ can be represented uniquely as TrDDs. We also suppose that all nodes in $f_0$ and $f_1$ that represent the same function are merged. Here, $f_0$, $f_1$ are not 0, and we suppose the two weights on the two incoming edges are $w_0$ and $w_1$, respectively.

Then, the TrDD for $f$ can be constructed with a node $v$, labelled with $\sin(x_i)$ and two outgoing edges connecting two the root nodes $v_0$ and $v_1$ for the TrDDs for $f_0$ and $f_1$. Then, $degree((v,v_0))=0$, $degree((v,v_1))=1$, $weight((v,v_0))=1$, $weight((v,v_1))=w_1/w_0$, and the weight of the incoming edge will be set to $w_0$. 

If the smallest term that $f$ depends on is some $\cos(x_i)$, the processes are the same, except that more sub-TrDDs $f_0, \cdots f_k$ should be constructed.
\end{proof}

\end{document}